\documentclass[11pt]{article}

\usepackage[T1]{fontenc}
\usepackage[utf8]{inputenc}
\usepackage{lmodern}
\usepackage[margin=1in]{geometry}
\usepackage{microtype}
\usepackage{amsmath,amssymb}
\usepackage{amsthm}
\usepackage{booktabs}
\usepackage{graphicx}
\usepackage{xcolor}
\usepackage{algorithm}
\usepackage[noend]{algpseudocode}
\usepackage{pgfplots}
\usepackage{pgfplotstable}
\usepackage{hyperref}
\usepackage[nameinlink,noabbrev]{cleveref}

\pgfplotsset{compat=1.18}
\hypersetup{
  colorlinks = true,
  linkcolor = blue!50!black,
  citecolor = blue!50!black,
  urlcolor = blue!50!black
}
\Crefname{theorem}{Theorem}{Theorems}
\Crefname{algorithm}{Algorithm}{Algorithms}
\makeatletter
\providecommand{\theHALG@line}{\thealgorithm.\arabic{ALG@line}}
\makeatother

\theoremstyle{plain}
\newtheorem{theorem}{Theorem}
\newtheorem{lemma}[theorem]{Lemma}

\theoremstyle{remark}
\newtheorem{remark}[theorem]{Remark}

\newcommand{\eqdist}{\mathrel{\overset{\mathrm{d}}{=}}}
\newcommand{\paperTitle}{Extended One-Liners for the Beta, Gamma, and Dirichlet Distributions with Shape Parameters Below One}
\title{\paperTitle}
\author{Dylan Greaves\\\texttt{dgreaves103@gmail.com}}
\date{}

\begin{document}
\maketitle
\begin{abstract}
We present an explicit deterministic transformation of a fixed number of i.i.d.\ uniform random variables with exact $\operatorname{Beta}(a,1-a)$ law for $0<a<1$, using only elementary operations (an ``extended one-liner'', see \cite{devroye1996oneline}).
As corollaries, the families $\operatorname{Beta}(a,b)$ with $\min(a,b)<1$, $\operatorname{Gamma}(c)$ with $c<1$, and $\operatorname{Dirichlet}(\alpha_1,\dots,\alpha_d)$ with $0<\alpha_i<1$, for fixed $d$, also have extended one-liners.
\end{abstract}

\section{Introduction}
\label{sec:intro}

Finding ways to sample from a parametric family $\{P_\theta\}_{\theta\in\Theta}$ by applying a simple deterministic transformation $T$ to a fixed number of i.i.d.\ uniform random variables
\[
T(U_1,\dots,U_k, \theta)\sim P_\theta, \qquad U_i\overset{\mathrm{i.i.d.}}{\sim}\operatorname{Unif}(0,1)
\]
has practical appeal.
Besides being easy to implement, such transformations have applications in machine learning through the reparameterization trick \cite{figurnov2018implicit,mohamed2020montecarlo}, in quasi-Monte Carlo \cite{practicalqmc},
and as sampling methods on GPUs, where performance is sensitive to looping and branching \cite{howes2007cuda}.

Although every real-valued random variable can be represented as a transformation of a single uniform random variable via the inversion method, the required quantile function often does not exist in closed-form and can be expensive to evaluate.
Nevertheless, many commonly used distributions with intractable quantile functions can be expressed as simple transformations of i.i.d.\ uniform random variables, including the Normal distribution via the Box--Muller formula, Student's \textit{t} distribution, and the symmetric Beta distribution \cite{devroye1996oneline}.
However, no such representation is known for several important families, perhaps most notably the general Gamma, Beta, and Dirichlet distributions \cite{devroye1996oneline,devroyejames2014stable,figurnov2018implicit}.

To make our notion of a simple transformation precise, we adopt Devroye's ``one-liner'' framework \cite{devroye1996oneline}.
We call a transformation a \textit{simple one-liner} if it can be represented as a fixed expression tree whose leaves are $U_1,\dots,U_k$, $\theta$, and constants, and whose internal nodes are operators from some admissible set $\mathcal{F}$, e.g.
\[
\mathcal{F} = \{+, -, \times, /, \bmod, \operatorname{round}, |\cdot|, \operatorname{sign},
\lfloor \cdot \rfloor, \lceil \cdot \rceil, \sin, \cos, \exp,
\log, \tan, \arctan\}.
\]
These transformations are called ``one-liners'' as they can be written in a single line of code.
If we allow variable reuse (i.e. a directed acyclic graph instead of a tree), we call the transformation an \textit{extended one-liner}.
Several additional operators can be expressed by combining operators in $\mathcal{F}$, for example
\[
  \mathbf{1}_{\{x>y\}} = \lceil\operatorname{sign}(x-y)/2\rceil,
  \qquad
  \max(x,y) = x + (y-x)\mathbf{1}_{\{y>x\}},
  \qquad
  x^c = \exp(c\log x)\quad (x>0).
\]
Likewise, a random variable with mixture distribution $wF+(1-w)G$ can be written as
\[
  \mathbf{1}_{\{U_{k+1}< w\}}T_F(U_1,\dots,U_k,\theta)
  + \mathbf{1}_{\{U_{k+1}\geq w\}}T_G(U_1,\dots,U_k,\tilde\theta),
\]
provided both $F$ and $G$ have one-liners.

In what follows, we construct an extended one-liner for the $\operatorname{Beta}(a,1-a)$ family with $0<a<1$ by applying a correction step to the proposal distribution used in J\"ohnk's rejection method \cite{johnk1964beta} for generating $\operatorname{Beta}(a,b)$ random variates.
We then extend this to the subfamilies $\operatorname{Beta}(a,b)$ with $\min(a,b)<1$, $\operatorname{Gamma}(c)$\footnote{We suppress the Gamma scale parameter, writing $\operatorname{Gamma}(c)$ for $\operatorname{Gamma}(c,1)$. A one-liner for $\operatorname{Gamma}(c)$ immediately yields one for $\operatorname{Gamma}(c,\theta)$, since if $X\sim\operatorname{Gamma}(c)$, then $\theta X\sim\operatorname{Gamma}(c,\theta)$.} with $c<1$, and $\operatorname{Dirichlet}(\alpha_1,\dots,\alpha_d)$ with $0<\alpha_i<1$ (for fixed $d$).
To the best of the author's knowledge, these are the first known extended one-liners for these subfamilies, apart from overlap with a few previously known cases ($\operatorname{Beta}(a,1)$ for $a>0$ via $U^{\frac{1}{a}}$, the symmetric Beta family $\operatorname{Beta}(a,a)$ for $a>0$, the half-Beta family $\operatorname{Beta}(\frac{1}{2}, b)$ for $b>0$, $\operatorname{Gamma}(1)$, $\operatorname{Gamma}(\frac{1}{2})$, see \cite{devroye1996oneline}).

\section{The \texorpdfstring{$\operatorname{Beta}(a,1-a)$}{Beta(a,1-a)} Generator}
\label{sec:algorithm}
Let $0<a<1$, and let $U_1,U_2 \overset{\mathrm{i.i.d.}}{\sim} \operatorname{Unif}(0,1)$.
Following J{\"o}hnk, let
\[
  P = \frac{U_1^{1/a}}{U_1^{1/a}+U_2^{1/(1-a)}}.
\]
Unlike J{\"o}hnk's method, which uses a rejection step, we apply a single conditional mixture step: given $P=p$, let
\begin{equation}
  B \mid P=p \sim \alpha_a(p)\operatorname{Unif}(0,p)
  + \bigl(1-\alpha_a(p)\bigr)\operatorname{Unif}(p,1),
\label{eq:mixture-def}
\end{equation}
where 
\begin{equation}
  \alpha_a(p) = p + \frac{\sin(\pi a)}{\pi a(1-a)}(1-a-p)\max(p,1-p).
\label{eq:alpha-def}
\end{equation}
The mixture weight $\alpha_a(p)$ is chosen based on a derivative-matching argument.
We prove in Theorem~\ref{thm:beta-law} that $B$ is distributed exactly $\operatorname{Beta}(a,1-a)$.
The pseudocode below implements this procedure:
\begin{algorithm}[H]
  \caption{\texorpdfstring{$\operatorname{Beta}(a,1-a)$}{Beta(a,1-a)} generator, parameter $0<a<1$}
  \label{alg:beta-core}
  \begin{algorithmic}[1]
    \State $U_1,U_2,U_3 \overset{\mathrm{i.i.d.}}{\sim} \operatorname{Unif}(0,1)$
    \State $P \gets \frac{U_1^{1/a}}{U_1^{1/a}+U_2^{1/(1-a)}}$
    \State $A \gets P + \frac{\sin(\pi a)}{\pi a(1-a)}(1-a-P)\max(P,1-P)$ \Comment{$A = \alpha_a(P)$}
    \State \parbox[t]{0.82\linewidth}{$B \gets \frac{P}{A}U_3$ if $U_3 \le A$, else $B \gets P + \frac{1-P}{1-A}(U_3-A)$}
    \State \textbf{return} $B$
  \end{algorithmic}
\end{algorithm}
We note that \Cref{alg:beta-core} can be modified to require only two uniform inputs; see Appendix~\ref{sec:two-uniform-variant}.

\begin{theorem}
\label{thm:beta-law}
Let $B$ be defined as above. Then
\[
  B \eqdist \operatorname{Beta}(a,1-a), \qquad 0<a<1.
\]
\end{theorem}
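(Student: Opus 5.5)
The plan is to compute the density of $P$ in closed form, write down the density of $B$ as an integral against it, and then show that this density and the $\operatorname{Beta}(a,1-a)$ density have the same derivative on $(0,1)$. Since both are probability densities on a bounded interval, they must then coincide.

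\emph{Step 1: the law of $P$.} Write $b=1-a$, $X=U_1^{1/a}$ and $Y=U_2^{1/b}$, so $X$ and $Y$ have densities $ax^{a-1}$ and $by^{b-1}$ on $(0,1)$. Change variables to $X=pt$, $Y=(1-p)t$, which has Jacobian $t$. The constraint $X,Y<1$ becomes $t<1/\max(p,1-p)$. Because $a+b=1$, the power of $t$ in the integrand is $t^{(a-1)+(b-1)+1}=t^0$. The $t$-integral is therefore trivial and gives
\[
f_P(p)=\frac{ab\,p^{a-1}(1-p)^{-a}}{\max(p,1-p)},\qquad 0<p<1 .
\]
Let $g(x)=\frac{\sin(\pi a)}{\pi}x^{a-1}(1-x)^{-a}$ be the $\operatorname{Beta}(a,1-a)$ density, using $\Gamma(a)\Gamma(1-a)=\pi/\sin(\pi a)$. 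Then
\[
g/f_P=\frac{\sin(\pi a)}{\pi ab}\max(p,1-p).
\]
This is exactly the factor appearing in \eqref{eq:alpha-def}.

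\emph{Step 2: validity of the mixture.} Before using \eqref{eq:mixture-def} we must check that $0\le\alpha_a(p)\le1$. Set $c=\sin(\pi a)/(\pi ab)$. On $p\le 1-a$ we have $\alpha_a\ge p\ge 0$, and on $p\ge 1-a$ we have $\alpha_a\le p\le 1$. So it remains to show $\alpha_a\le1$ on $[0,1-a]$ and $\alpha_a\ge0$ on $[1-a,1]$.

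The endpoint values are $\alpha_a(0)=\sin(\pi a)/(\pi a)\le 1$ and $\alpha_a(1)=1-\sin(\pi a)/(\pi(1-a))\ge 0$, both by $\sin x\le x$. Also $\alpha_a(1-a)=1-a\in(0,1)$. On $[0,\tfrac12]$, $\alpha_a$ is a convex quadratic, so its maximum on any subinterval is attained at an endpoint. On $[\tfrac12,1]$, $\alpha_a$ is a concave quadratic, so its minimum on any subinterval is attained at an endpoint. The value at $p=\tfrac12$ must also be checked; it reduces to an inequality of the type $\sin(\pi a)\le \pi\min(a,1-a)$.

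I expect this case analysis, depending on which side of $\tfrac12$ the point $1-a$ lies, to be the fiddliest part of the argument. It uses only elementary bounds on $\sin$.

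\emph{Step 3: density of $B$ and derivative matching.} Conditioning on $P$ gives
\[
f_B(x)=\int_x^1 f_P(p)\frac{\alpha_a(p)}{p}\,dp+\int_0^x f_P(p)\frac{1-\alpha_a(p)}{1-p}\,dp .
\]
For fixed $x\in(0,1)$ both integrals are finite, since the only singularities sit at the excluded endpoints. The function $f_B$ is absolutely continuous on compact subintervals of $(0,1)$, with
\[
f_B'(x)=f_P(x)\frac{x-\alpha_a(x)}{x(1-x)} .
\]
On the other hand,
\[
g'(x)=g(x)\frac{x-(1-a)}{x(1-x)} .
\]
So $f_B'=g'$ holds exactly when $\alpha_a(x)=x+\frac{g(x)}{f_P(x)}(1-a-x)$, which is \eqref{eq:alpha-def} by Step 1. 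Hence $f_B-g$ is constant on $(0,1)$. Since both $f_B$ and $g$ integrate to $1$ over $(0,1)$, the constant is $0$, and therefore $B\eqdist\operatorname{Beta}(a,1-a)$.
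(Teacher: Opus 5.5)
Your Steps 1 and 3 are correct and follow the paper's derivative-matching proof. The paper substitutes $\alpha_a$ to rewrite the two integrands as $f_P-(1-p)g'$ and $f_P+pg'$, then integrates by parts, tracking the boundary terms $\epsilon g(\epsilon)\to 0$ and $\epsilon g(1-\epsilon)\to 0$. You instead differentiate $f_B$ directly and fix the additive constant using the normalization of $f_B$ and $g$. That is a slightly cleaner finish, since no boundary limits are needed.

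The gap is in Step 2, which the paper proves separately as Lemma~\ref{lem:alpha-bounds}. Convexity on $[0,\tfrac12]$ only locates maxima, and concavity on $[\tfrac12,1]$ only locates minima. So checking the values at $0,\tfrac12,1-a,1$ does not cover the two ``cross'' pieces:
\begin{itemize}
\item For $a<\tfrac12$ you need $\alpha_a\le 1$ on $[\tfrac12,1-a]$. There $\alpha_a$ is concave and could a priori peak in the interior.
\item For $a>\tfrac12$ you need $\alpha_a\ge 0$ on $[1-a,\tfrac12]$. There $\alpha_a$ is convex and could a priori dip in the interior.
\end{itemize}
These pieces need an extra argument, for example either of the following.
\begin{itemize}
\item As the paper does: for $\tfrac12\le p\le 1-a$, use $\kappa(a)=\frac{\sin(\pi a)}{\pi a(1-a)}\le\frac{1}{1-a}\le\frac{1}{p}$ to get $\alpha_a(p)=p+\kappa(a)(1-a-p)p\le 1-a$. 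Then obtain the other piece from the symmetry $\alpha_{1-a}(1-p)=1-\alpha_a(p)$.
\item Alternatively: $\alpha_a'$ is linear on each piece. At the two ends of the relevant interval it equals $1-\kappa(a)a$ and $1-\kappa(a)(1-a)$, both nonnegative by $\sin(\pi a)\le\pi\min(a,1-a)$. Hence $\alpha_a$ is nondecreasing there, and the relevant extremum is $\alpha_a(1-a)=1-a\in(0,1)$.
\end{itemize}
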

Before proving the theorem, we first verify that $\alpha_a(p)$ is a valid mixture weight for all $0<a<1$ and $0<p<1$:
\begin{lemma}
\label{lem:alpha-bounds}
For $0<a<1$ and $0<p<1$, define
\[
  \alpha_a(p) = p + \dfrac{\sin(\pi a)}{\pi a(1-a)}(1-a-p)\max(p,1-p).
\]
Then $0 < \alpha_a(p) < 1$.
\end{lemma}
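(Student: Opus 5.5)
The plan is to set $c=\frac{\sin(\pi a)}{\pi a(1-a)}$ and $m=\max(p,1-p)$, so that $\alpha_a(p)=p+c(1-a-p)m$. The only facts about $c$ I expect to need are
\[
c(1-a)=\frac{\sin(\pi a)}{\pi a}<1,\qquad ca=\frac{\sin(\pi(1-a))}{\pi(1-a)}<1,
\]
both from $\sin x<x$ for $x>0$, together with $c>0$. The sign of $1-a-p$ decides which bound is trivial. If $p=1-a$, then $\alpha_a(p)=p\in(0,1)$. If $p<1-a$, the correction term is positive, so $\alpha_a(p)>p>0$. If $p>1-a$, it is negative, so $\alpha_a(p)<p<1$.

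\textbf{Reduction by symmetry.} Since $c$ and $m$ are invariant under $a\mapsto 1-a$, $p\mapsto 1-p$, a direct computation gives
\[
1-\alpha_{1-a}(1-p)=1-(1-p)-c\bigl(a-(1-p)\bigr)m=p+c(1-a-p)m=\alpha_a(p).
\]
Hence the lower bound $\alpha_a(p)>0$ for $p>1-a$ is equivalent to the upper bound $\alpha_{1-a}(1-p)<1$ for $1-p<1-(1-a)$. This is the upper-bound case with parameter $1-a$, and it uses the second inequality for $c$. So it suffices to prove $\alpha_a(p)<1$ whenever $0<p<1-a$.

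\textbf{Upper bound for $p<1-a$.} I will split according to which branch of the max is active.

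If $p\le\frac12$, then $m=1-p$. The claim $p+c(1-a-p)(1-p)<1$ becomes $c(1-a-p)<1$. This follows from $c(1-a-p)<c(1-a)<1$, since $p>0$.

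If $\frac12<p<1-a$ (possible only when $a<\frac12$), then $m=p$. The claim becomes $c\,p(1-a-p)<1-p$, that is,
\[
c\,p\Bigl(1-\frac{a}{1-p}\Bigr)<1 .
\]
Because $\frac{1}{1-p}\ge 1$, the left side is at most $c\,p(1-a)$. This is less than $c(1-a)<1$.

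I expect the main obstacle to be this second subcase. A naive bound such as $cm<1$ fails, because $c$ can be as large as $4/\pi$. The key is to divide by $1-p$ and absorb the factor $\frac{a}{1-p}\ge a$ before applying the bound on $c(1-a)$. Everything else is sign bookkeeping plus the symmetry $\alpha_a(p)=1-\alpha_{1-a}(1-p)$.
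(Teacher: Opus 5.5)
Your proof is correct. It uses the same ingredients as the paper: the reflection identity $\alpha_a(p)=1-\alpha_{1-a}(1-p)$ and the two bounds $c(1-a)<1$, $ca<1$ from $\sin x<x$. What differs is what you spend the symmetry on.

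The paper uses the symmetry to fix the branch of the maximum, reducing to $p\ge\tfrac12$ where $\alpha_a(p)=p+\kappa(a)(1-a-p)p$. It then proves two nontrivial bounds:
\begin{itemize}
\item an upper bound for $p\le 1-a$, via $\kappa p\le p/(1-a)\le 1$, giving $\alpha_a(p)\le 1-a$;
\item a lower bound for $p>1-a$, via $\kappa\le 1/a$, giving $\alpha_a(p)\ge p(1-p)/a$.
\end{itemize}
You instead spend the symmetry on the sign of $1-a-p$. That leaves a single nontrivial inequality, $\alpha_a(p)<1$ for $p<1-a$, but you must treat both branches of the maximum. Neither route is really shorter. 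The paper's yields explicit margins; yours isolates one inequality and makes the second sine bound enter only through the reflection.

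One correction to your commentary: the subcase $\tfrac12<p<1-a$ is not where the difficulty lies. There $cm=cp<c(1-a)<1$ precisely because $p<1-a$, so the ``naive'' bound works. It immediately gives $\alpha_a(p)<p+(1-a-p)=1-a$, which is exactly the paper's argument, so dividing by $1-p$ is an unnecessary detour. The naive bound genuinely fails only in your other subcase $p\le\tfrac12$. There $m=1-p$, and $c(1-p)$ exceeds $1$ for small $p$ when $a=\tfrac12$, since $c=4/\pi$. You handled that subcase correctly by cancelling the factor $1-p$.
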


\begin{proof}
Set
\[
  \kappa(a) = \frac{\sin(\pi a)}{\pi a(1-a)}>0
  \qquad\text{and}\qquad
  M(p) = \max(p,1-p).
\]
Since $\kappa(a)=\kappa(1-a)$ and $M(p)=M(1-p)$, we have
\[
  \begin{aligned}
    \alpha_{1-a}(1-p)
      &= 1-p + \kappa(1-a)(p+a-1)M(1-p) \\
      &= 1-p-\kappa(a)(1-a-p)M(p) \\
      &= 1-\alpha_a(p).
  \end{aligned}
\]
By symmetry, it suffices to show the inequality for $p \geq \frac{1}{2}$, where
\[
  \alpha_a(p) = p + \kappa(a)(1-a-p)p.
\]
From the inequalities $\sin(\pi a) \leq \pi a$ and $\sin(\pi a) \leq \pi (1 - a)$, we have $\kappa(a)\leq \frac{1}{1-a}$ and $\kappa(a) \leq \frac{1}{a}$, respectively.
We split into two cases:
\begin{enumerate}
\item If $a + p \leq 1$, then
\[
  \alpha_a(p) = p + \kappa(a)(1-a-p)p\geq p > 0.
\]
For the upper bound, using $\kappa(a) \leq \frac{1}{1-a}\leq\frac{1}{p}$, we have
\[
  \alpha_a(p) = p + \kappa(a)(1-a-p)p \leq p + 1 - a-p =1-a< 1.
\]
\item If $a + p > 1$, then
\[
  \alpha_a(p) = p + \kappa(a)(1-a-p)p < p < 1.
\]
For the lower bound, using $\kappa(a) \leq \frac{1}{a}$ we have
\[
  \alpha_a(p) = p - \kappa(a)(a+p-1)p \geq p-\frac{a+p-1}{a}p =\frac{p(1-p)}{a} > 0.\qedhere
\]

\end{enumerate}
\end{proof}

\begin{proof}[Proof of \Cref{thm:beta-law}]
Let
\[
  X=U_1^{\frac{1}{a}},
  \qquad
  Y=U_2^{\frac{1}{1-a}},
  \qquad
  P=\frac{X}{X+Y},
  \qquad
  S=X+Y.
\]
Using the change of variables $x=ps$, $y=(1-p)s$, with Jacobian $s$, the joint density of $(P,S)$ is
\[
  \begin{aligned}
    f_{P,S}(p,s)
      &= a(1-a)s(ps)^{a-1}((1-p)s)^{-a} \\
      &= a(1-a)p^{a-1}(1-p)^{-a},
  \end{aligned}
\]
where $0<p<1$ and $0<s<\frac{1}{\max(p,1-p)}$, hence the marginal density of $P$ is
\begin{equation}
f_P(p)= a(1-a)\frac{p^{a-1}(1-p)^{-a}}{\max(p,1-p)}
\qquad 0<p<1.
\label{eq:fp-marginal}
\end{equation}
Since the conditional density of $B$ given $P=p$ is
\[
 f_{B \mid P}(b \mid p) = \frac{\alpha_a(p)}{p}\mathbf{1}_{\{0<b<p\}} + \frac{1-\alpha_a(p)}{1-p}\mathbf{1}_{\{p<b<1\}},
\]
the marginal density of $B$ is 
\begin{equation}
f_B(b)= \int_0^1 f_{B \mid P}(b \mid p)f_P(p)\,\mathrm{d}p = \int_b^1 \frac{\alpha_a(p)}{p}f_P(p)\,\mathrm{d}p + \int_0^b\frac{1-\alpha_a(p)}{1-p}f_P(p)\,\mathrm{d}p
\qquad 0<b<1.
\label{eq:fb-marginal}
\end{equation}
We want to show that $f_B(b)$ matches the target $\operatorname{Beta}(a,1-a)$ density
\begin{equation}
g(b) = \frac{1}{\Gamma(a)\Gamma(1-a)}b^{a-1}(1-b)^{-a} = \frac{\sin(\pi a)}{\pi}b^{a-1}(1-b)^{-a},
\label{eq:beta-density}
\end{equation}
where the second equality follows from Euler's reflection formula.
Differentiating $g$, we have
\begin{equation}
  g'(b)= \left(\frac{a-1}{b}+\frac{a}{1-b}\right)g(b)= \left(\frac{a+b-1}{b(1-b)}\right)g(b).
\label{eq:beta-deriv}
\end{equation}
	Substituting in the definition of $\alpha_a(p)$
\[
  \alpha_a(p) = p + \frac{\sin(\pi a)}{\pi a(1-a)}(1-a-p)\max(p,1-p),
\]
we have
\[
 \frac{\alpha_a(p)}{p}f_P(p) = f_P(p)-(1-p)g'(p),
\]
and 
\[
 \frac{1-\alpha_a(p)}{1-p}f_P(p) = f_P(p)+pg'(p),
\]
hence
\[
\begin{aligned}
f_B(b) &= \int_b^1 \frac{\alpha_a(p)}{p}f_P(p)\,\mathrm{d}p + \int_0^b\frac{1-\alpha_a(p)}{1-p}f_
P(p)\,\mathrm{d}p \\
&= \int_b^1\left(f_P(p) - (1-p)g'(p)\right)\,\mathrm{d}p + \int_0^b\left(f_P(p) + pg'(p)\right)\,\mathrm{d}p \\
&= \int_0^1 f_P(p)\,\mathrm{d}p + \int_0^b pg'(p)\,\mathrm{d}p -  \int_b^1 (1-p)g'(p)\,\mathrm{d}p.\\
&=  1 + \int_0^b pg'(p)\,\mathrm{d}p -  \int_b^1 (1-p)g'(p)\,\mathrm{d}p.
\end{aligned}
\]
Using integration by parts,
\[
\begin{aligned}
\int_\epsilon^b pg'(p)\,\mathrm{d}p &= [pg(p)]_\epsilon^b -\int_\epsilon^b g(p)\,\mathrm{d}p\\
&\rightarrow bg(b)-\int_0^bg(p)\,\mathrm{d}p\qquad\text{as }\epsilon\rightarrow 0^+,
\end{aligned}
\]
as $\epsilon g(\epsilon)=\frac{\sin(\pi a)}{\pi}\left(\frac{\epsilon}{1-\epsilon}\right)^a\rightarrow 0,$ and
\[
\begin{aligned}
\int_b^{1-\epsilon} (1-p)g'(p)\,\mathrm{d}p &= [(1-p)g(p)]_b^{1-\epsilon} +\int_b^{1-\epsilon} g(p)\,\mathrm{d}p\\
&\rightarrow -(1-b)g(b)+\int_b^1 g(p)\,\mathrm{d}p\qquad\text{as }\epsilon\rightarrow 0^+,
\end{aligned}
\]
as $\epsilon g(1-\epsilon)=\frac{\sin(\pi a)}{\pi}\left(\frac{\epsilon}{1-\epsilon}\right)^{1-a}\rightarrow 0.$

Thus
\begin{align*}
f_B(b) &=  1 + \int_0^b pg'(p)\,\mathrm{d}p -  \int_b^1 (1-p)g'(p)\,\mathrm{d}p\\
&= 1 + bg(b)-\int_0^bg(p)\,\mathrm{d}p + (1-b)g(b)-\int_b^1g(p)\,\mathrm{d}p\\
&= 1 + g(b)-\int_0^1g(p)\,\mathrm{d}p\\
&= g(b). \qedhere
\end{align*}
\end{proof}
\begin{remark}[Motivation for the mixture weight]
Suppose $P\sim f_P$ and $B\mid P=p$ is generated via a mixture as in \eqref{eq:mixture-def}.
Applying Leibniz's rule to \eqref{eq:fb-marginal}, when justified, gives
\[
f_B'(b)= -\frac{\alpha(b)}{b}f_P(b)
+ \frac{1-\alpha(b)}{1-b}f_P(b),
\qquad 0<b<1.
\]
If $g = f_B$ on $(0,1)$, then rearranging yields, at points where $f_P(b)>0$ and $g'(b)$ exists,
\begin{equation}
\alpha(b)= b - \frac{b(1-b)g'(b)}{f_P(b)}.
\label{eq:alpha-force}
\end{equation}
Thus, under these conditions, the mixture weight $\alpha$ is essentially determined by the source and target densities.
In this particular case, substituting \eqref{eq:fp-marginal} and \eqref{eq:beta-deriv} into \eqref{eq:alpha-force} yields 
\[
\alpha(b)
= b + \frac{\sin(\pi a)}{\pi a(1-a)}(1-a-b)\max(b,1-b).
\]
\end{remark}

\section{Further Constructions}
\label{sec:consequences}
Theorem~\ref{thm:beta-law} yields a few additional extended one-liners, which we record below:

\paragraph{Gamma.}
By \cite[Section IX.3.5]{devroye1986nonuniform}, if $0<a<1$, let
$B \sim \operatorname{Beta}(a,1-a)$ and $U\sim\operatorname{Unif}(0,1)$ be
independent, and set $E=-\log U \eqdist \operatorname{Exp}(1)$. Then
\[
  EB \eqdist \operatorname{Gamma}(a).
\]
Thus $\operatorname{Gamma}(a)$ with $0<a<1$ has an extended one-liner.

\paragraph{Beta.}
If $0<a<1$ and $b>0$, let $B \sim \operatorname{Beta}(a,1-a)$ and
$U\sim\operatorname{Unif}(0,1)$ be independent, and set
$V = 1-U^{1/b} \eqdist \operatorname{Beta}(1,b)$. Then
\[
  \frac{BV}{1-(1-B)V} \eqdist \operatorname{Beta}(a,b).
\]
By symmetry, $\operatorname{Beta}(a,b) \eqdist 1-\operatorname{Beta}(b,a)$. Thus $\operatorname{Beta}(a,b)$ with $\min(a,b)<1$ has an extended one-liner.

\paragraph{Dirichlet.}
By the previous constructions and \cite[Theorems XI.4.1, XI.4.2]{devroye1986nonuniform},
$\operatorname{Dirichlet}(\alpha_1,\dots,\alpha_d)$ with $0<\alpha_i<1$ for each $i$ (for fixed $d$) has an extended one-liner.

\section{Numerical Checks}
\label{sec:evaluation}
For $a\in\{0.05,0.10,\dots,0.95\}$, we simulate $N=10^8$ draws from \Cref{alg:beta-core}, and compare the empirical moments $\widehat{m}_k=\frac{1}{N}\sum_i X_i^k$ with the theoretical $B\sim\operatorname{Beta}(a,1-a)$ moments $\mathbb{E}[B^k]=\frac{a(a+1)\dots(a+k-1)}{k!}$ for $k=1,2,3.$
The table reports the empirical moment, the theoretical moment, the Monte Carlo standard error (MCSE), and $z=(\widehat{m}_k-\mathbb{E}[B^k])/\operatorname{MCSE}$ for each value of $a$.
\begin{table}[H]
  \centering
  \setlength{\tabcolsep}{4pt}
  \small
  \caption{
  \texorpdfstring{$\operatorname{Beta}(a,1-a)$}{Beta(a,1-a)} sampler moment diagnostics.}
  \label{tab:ks-moments}
  \resizebox{\textwidth}{!}{
  \begin{filecontents*}{ks_moments.csv}
a,emp_mom_1,the_mom_1,mcse_mom_1,z_mom_1,emp_mom_2,the_mom_2,mcse_mom_2,z_mom_2,emp_mom_3,the_mom_3,mcse_mom_3,z_mom_3,emp_mom_4,the_mom_4,mcse_mom_4,z_mom_4
0.05,0.04999803464995114,0.05,1.5411035007422442e-05,-0.12752875117848514,0.026245525520320438,0.026250000000000002,1.1396614080506541e-05,-0.3926148282249629,0.017932297671006726,0.0179375,9.48829552412471e-06,-0.5482890978728026,0.013671883707568468,0.013677343749999998,8.315632573108457e-06,-0.6565997695937519
0.1,0.10000442048349835,0.1,2.1213203435596428e-05,0.20838359052013108,0.0549981367879921,0.05500000000000001,1.6374522893812816e-05,-0.1137872547488411,0.0384946385247252,0.038500000000000006,1.3897657176661108e-05,-0.38578266873726735,0.0298302311053767,0.029837500000000006,1.2326301889912075e-05,-0.5897060357783503
0.15,0.14998454581587678,0.15,2.5248762345905197e-05,-0.6120768975324087,0.08624854905200334,0.08625,2.030721084984346e-05,-0.07144989074975046,0.061818203181299325,0.06181249999999999,1.7566378630871103e-05,0.32466460043788964,0.04868686240132398,0.04867734374999999,1.576734583562869e-05,0.603693952249021
0.2,0.199955190057641,0.2,2.82842712474619e-05,-1.5842707053314726,0.1199738870237427,0.12,2.3664319132398465e-05,-1.1034746493737546,0.08798249584902394,0.08800000000000001,2.085837961108197e-05,-0.839190354305692,0.07038684909422543,0.0704,1.8946864648273605e-05,-0.6940940371244051
0.25,0.2500271670995433,0.25,3.061862178478972e-05,0.8872737556335185,0.1562726691548585,0.15625,2.6608416196760002e-05,0.8519543099017826,0.1172069576699974,0.1171875,2.3892076067770356e-05,0.8143984617415919,0.09523196982694698,0.09521484375,2.196248136811614e-05,0.7797878873488953
0.3,0.3000031465413936,0.3,3.24037034920393e-05,0.09710437556556022,0.1950054495766159,0.195,2.9208303613869806e-05,0.18657627940079533,0.14950784451905627,0.1495,2.67105105716832e-05,0.2936866008317077,0.12334654065925618,0.12333749999999999,2.484656879291887e-05,0.3638594661315837
0.35,0.3500033458940345,0.35,3.3726843908080104e-05,0.09920566666777371,0.23624739485198748,0.23625,3.149218653094764e-05,-0.08272363082657576,0.18505716475224163,0.1850625,2.9323240336431526e-05,-0.18194605020291732,0.1549826598188398,0.15498984375,2.760165732670388e-05,-0.26027173206183446
0.4,0.399948208124054,0.4,3.4641016151377533e-05,-1.4951026759631039,0.27993293228610006,0.27999999999999997,3.3466401061363015e-05,-2.0040312603958217,0.22393056895317232,0.22399999999999995,3.172078183147446e-05,-2.1888189010127705,0.19033266121307746,0.19039999999999996,3.021238951158945e-05,-2.228846774819595
0.45,0.4500021686914473,0.45,3.5178118198675715e-05,0.06164887601618639,0.3262391987520205,0.32625,3.5123109379723204e-05,-0.3075253919777001,0.2664229767170883,0.2664375,3.3880245471411875e-05,-0.4286652209747808,0.22978670463656925,0.22980234375,3.264990928234916e-05,-0.4789940852669811
0.5,0.5000114422480805,0.5,3.535533905932738e-05,0.323635648388855,0.37501838997736214,0.375,3.6443449342783126e-05,0.5046168157456523,0.3125236052073931,0.3125,3.576725981956124e-05,0.6599668946452261,0.27346521321389206,0.2734375,3.487299081354007e-05,0.794689908882015
0.55,0.5500327524811423,0.55,3.517811819867573e-05,0.9310469922589371,0.4262918266670655,0.4262500000000001,3.73982995936981e-05,1.1184109309738135,0.36235583987968145,0.36231250000000004,3.733581895382937e-05,1.1608123484584558,0.32159591476139293,0.32155234375,3.682723494858295e-05,1.1831192717492678
0.6,0.5999947107467138,0.6,3.464101615137755e-05,-0.15268759042807084,0.479989156805835,0.48,3.794733192202056e-05,-0.2857432556070639,0.41598688844433956,0.416,3.852645844092083e-05,-0.34032600428411663,0.37438583514914797,0.3744,3.844257639649039e-05,-0.3684677818142243
0.65,0.6499589027844938,0.65,3.37268439080801e-05,-1.2185313164278495,0.5362008641797537,0.53625,3.803626969748742e-05,-1.2918149081689407,0.4736344855400109,0.4736875,3.926247515221277e-05,-1.350257715122092,0.43218517668602496,0.43223984374999996,3.962868119327576e-05,-1.3794822923422487
0.7,0.7000174995856203,0.7,3.240370349203931e-05,0.540048936834059,0.5950260988579555,0.595,3.759155490266398e-05,0.6942744992365385,0.5355315970730351,0.5355,3.944314103871548e-05,0.8010790267467204,0.49537294487022454,0.4953375,4.02670036230814e-05,0.8802460336094479
0.75,0.7500128596526034,0.75,3.061862178478972e-05,0.41999449530532723,0.6562619633710436,0.65625,3.6510379517337256e-05,0.3276704104899145,0.6015724378408394,0.6015625,3.8931420698172896e-05,0.25526530142334103,0.5639725014794592,0.56396484375,4.019668443301448e-05,0.19050649493170718
0.8,0.8000029486053809,0.8,2.8284271247461896e-05,0.10424894299243313,0.7200125158682004,0.7200000000000001,3.464101615137753e-05,0.36130199372868177,0.6720185872520634,0.672,3.753014788140328e-05,0.4952618924423361,0.6384224431977075,0.6384,3.918752454544686e-05,0.5727128204148781
0.85,0.8500058119729407,0.85,2.524876234590521e-05,0.2301884290830303,0.7862383188565655,0.78625,3.1739294454981186e-05,-0.3680341241074562,0.7469124265131445,0.7469375,3.492991646788071e-05,-0.7178226973029954,0.7188926421376355,0.71892734375,3.688139875855871e-05,-0.9408974044528456
0.9,0.9000035845806996,0.9,2.1213203435596407e-05,0.16897875469319265,0.8550009684455879,0.855,2.7351873793215714e-05,0.03540691929434541,0.8264989117599932,0.8265,3.0570653820289833e-05,-0.03559753786006325,0.805835028825498,0.8058375,3.263780811085897e-05,-0.07571508765644225
0.95,0.9499917462042943,0.95,1.5411035007422422e-05,-0.5355769876394645,0.9262428670493627,0.9262499999999999,2.036867233032142e-05,-0.35019222272224687,0.9108070430588381,0.9108124999999999,2.3115662346191306e-05,-0.23607115730036843,0.8994231417967722,0.89942734375,2.4951273322286457e-05,-0.16840636441821552
\end{filecontents*}
\pgfplotstabletypeset[
    col sep=comma,
    columns={
      a,
      emp_mom_1, the_mom_1, mcse_mom_1, z_mom_1,
      emp_mom_2, the_mom_2, mcse_mom_2, z_mom_2,
      emp_mom_3, the_mom_3, mcse_mom_3, z_mom_3
    },
    columns/a/.style={column name={$a$}, fixed, fixed zerofill, precision=2},
    columns/emp_mom_1/.style={column name={Emp.}, fixed, fixed zerofill, precision=6},
    columns/the_mom_1/.style={column name={Theo.}, fixed, fixed zerofill, precision=6},
    columns/mcse_mom_1/.style={column name={MCSE}, sci, sci zerofill, precision=2},
    columns/z_mom_1/.style={column name={$z$}, fixed, fixed zerofill, precision=2},
    columns/emp_mom_2/.style={
      column name={Emp.},
      fixed,
      fixed zerofill,
      precision=6
    },
    columns/the_mom_2/.style={column name={Theo.}, fixed, fixed zerofill, precision=6},
    columns/mcse_mom_2/.style={column name={MCSE}, sci, sci zerofill, precision=2},
    columns/z_mom_2/.style={column name={$z$}, fixed, fixed zerofill, precision=2},
    columns/emp_mom_3/.style={
      column name={Emp.},
      fixed,
      fixed zerofill,
      precision=6
    },
    columns/the_mom_3/.style={column name={Theo.}, fixed, fixed zerofill, precision=6},
    columns/mcse_mom_3/.style={column name={MCSE}, sci, sci zerofill, precision=2},
    columns/z_mom_3/.style={column name={$z$}, fixed, fixed zerofill, precision=2},
    every head row/.style={
      before row={
        \toprule
        & \multicolumn{4}{c}{$k=1$}
        & \multicolumn{4}{c}{$k=2$}
        & \multicolumn{4}{c}{$k=3$}\\
        \cmidrule(lr){2-5}\cmidrule(lr){6-9}\cmidrule(lr){10-13}
      },
      after row=\midrule
    },
    every last row/.style={after row=\bottomrule},
  ]{ks_moments.csv}
  }
\end{table}

\section*{Acknowledgements}
The results in this paper stemmed from significant interaction with ChatGPT-5.4 Pro, which generated the construction and main proof ideas.
Harmonic's Aristotle \cite{achim2025aristotleimolevelautomatedtheorem} was used to independently verify the main result in Lean.
The author verified the correctness of the results, and takes responsibility for the exposition and any errors.
Codex was used for simulations, optimization, and typesetting.
The author thanks Luc Devroye for helpful correspondence, and Art Owen for introducing this problem in his Monte Carlo course.

\nocite{johnk1964beta,devroye1986nonuniform}
\bibliographystyle{plain}
\bibliography{refs}

\appendix
\section{Appendix}
\label{sec:appendix-details}
\subsection{Two Uniform Variant}
\label{sec:two-uniform-variant}

The ratio $P=\frac{U_1^{1/a}}{U_1^{1/a} + U_2^{1/(1-a)}}$ has a simple closed-form CDF:
\[
  F_P(p)=
  \begin{cases}
    (1-a)\left(\dfrac{p}{1-p}\right)^a, & 0<p\le \tfrac12, \\[1ex]
    1-a\left(\dfrac{1-p}{p}\right)^{1-a}, & \tfrac12 \le p<1.
  \end{cases}
\]
Using the inversion method, we can generate $F_P^{-1}(U) \eqdist P$ from a single uniform $U$. 
We can therefore modify \Cref{alg:beta-core} to construct a $\operatorname{Beta}(a,1-a)$ extended one-liner that requires only two uniforms instead of three:
\begin{algorithm}[H]
  \caption{$\operatorname{Beta}(a,1-a)$ generator, $0<a<1$}
  \label{alg:beta-compact}
  \begin{algorithmic}[1]
    \State $b \gets 1-a$
    \State $\kappa \gets \frac{\sin(\pi a)}{\pi a(1-a)}$
    \State $U_1,U_2 \overset{\mathrm{i.i.d.}}{\sim} \operatorname{Unif}(0,1)$
    \If{$U_1 \le b$}
      \State $P \gets \frac{(U_1/b)^{1/a}}{1+(U_1/b)^{1/a}}$
      \State $A \gets P+\kappa(b-P)(1-P)$
    \Else
      \State $P \gets \frac{1}{1+\left((1-U_1)/a\right)^{1/b}}$
      \State $A \gets P+\kappa(1-a-P)P$
    \EndIf
    \State \parbox[t]{0.82\linewidth}{$B \gets \frac{P}{A}U_2$ if $U_2 \le A$, else $B \gets P+\frac{1-P}{1-A}(U_2-A)$}
    \State \textbf{return} $B$
  \end{algorithmic}
\end{algorithm}

\begin{figure}[H]
  \centering
  \def\AlphaAValues{0.05,0.10,0.15,0.20,0.25,0.30,0.35,0.40,0.45,0.50,0.55,0.60,0.65,0.70,0.75,0.80,0.85,0.90,0.95}
\def\AlphaExpr#1{x + (sin(deg(pi*#1))/(pi*#1*(1-#1)))*(1-#1-x)*max(x,1-x)}

\begin{tikzpicture}
  \begin{axis}[
    width=0.70\linewidth,
    height=0.70\linewidth,
    xlabel={$p$},
    ylabel={$\alpha_a(p)$},
    ymin=0,
    ymax=1,
    xmin=0,
    xmax=1,
    xtick={0,0.25,0.5,0.75,1},
    ytick={0,0.25,0.5,0.75,1},
    ymajorgrids=true,
    grid style={gray!20},
    samples=201,
    domain=0:1
  ]
    \foreach \a [count=\i from 0] in \AlphaAValues {
      \pgfmathtruncatemacro{\mix}{round(100*\i/18)}
      \edef\AlphaPlotCommand{\noexpand\addplot+[no marks, semithick, blue!\mix!red]{\AlphaExpr{\a}};}
      \AlphaPlotCommand
    }

    \addplot+[no marks, semithick, blue!0!red]
      {\AlphaExpr{0.05}}
      node[pos=0.18, above, sloped, font=\scriptsize, fill=white, inner sep=1pt] {$a=0.05$};

    \addplot+[no marks, semithick, blue!22!red]
      {\AlphaExpr{0.25}}
      node[pos=0.32, above, sloped, font=\scriptsize, fill=white, inner sep=1pt] {$a=0.25$};

    \addplot+[no marks, semithick, blue!50!red]
      {\AlphaExpr{0.50}}
      node[pos=0.52, below, sloped, font=\scriptsize, fill=white, inner sep=1pt] {$a=0.50$};

    \addplot+[no marks, semithick, blue!78!red]
      {\AlphaExpr{0.75}}
      node[pos=0.70, below, sloped, font=\scriptsize, fill=white, inner sep=1pt] {$a=0.75$};

    \addplot+[no marks, semithick, blue!100!red]
      {\AlphaExpr{0.95}}
      node[pos=0.84, above, sloped, font=\scriptsize, fill=white, inner sep=1pt] {$a=0.95$};
  \end{axis}
\end{tikzpicture}
  \caption{The function $\alpha_a(p)$ on $(0,1)$ for
  $a\in\{0.05,0.10,\dots,0.95\}$.}
  \label{fig:alpha-profile}
\end{figure}

\end{document}